\def\diag{\textrm{diag}}
 \newcommand{\xqedhere}[2]{%
\rlap{\hbox to#1{\hfil\llap{\ensuremath{#2}}}}}
\newtheorem{theorem}{\bf{Theorem}}
\newtheorem{remark}{\bf{Remark}}
\begin{document}%
%\title{Detection and Analysis for Massive Grant-free NOMA\rule[6pt]{1cm}{0.05em}Part I: Active User Detection and Channel Estimation}
\title{A Gradient Meta-Learning Joint Optimization for Beamforming and Antenna Position in Pinching-Antenna Systems}
%\author{Mingyi Qiu\textsuperscript{*}, Donghong Cai\textsuperscript{*}, Zhiquan Liu\textsuperscript{*}, Kun Cao\textsuperscript{*}, Ning Gao\textsuperscript{\dag}, and Yingyang Chen\textsuperscript{*}\\
%Jinan Univerisyt}

%\author{
%\normalsize{\IEEEauthorblockN{Kang Zhou, Weixi Zhou,~\IEEEmembership{Member,~IEEE}, Donghong Cai,~\IEEEmembership{Member,~IEEE}, Xianfu Lei,~\IEEEmembership{Member,~IEEE},\\ Zhiguo Ding,~\IEEEmembership{Fellow,~IEEE},  and Pingzhi Fan,~\IEEEmembership{Life Fellow,~IEEE}
% }}
\author{
\normalsize{\IEEEauthorblockN{Kang Zhou, Weixi Zhou, Donghong Cai, Xianfu Lei,~\IEEEmembership{Member,~IEEE}, Yanqing Xu,\\ Zhiguo Ding,~\IEEEmembership{Fellow,~IEEE},  and Pingzhi Fan,~\IEEEmembership{Life Fellow,~IEEE}
}}

 %\thanks{This work was supported by National
%Natural Science Foundation of China (NSFC)
%under Grant No. 62001190.}
\thanks{K. Zhou and W. Zhou are with the College of Computer Science, Sichuan Normal University, Chengdu 610031, China (e-mail: zhoukang.sicnu@foxmail.com; zhouweixi@sicnu.edu.cn). (Corresponding author: Weixi Zhou.)}% <-this % stops a space
\thanks{ D. Cai is with the College of Cyber Security, Jinan University, Guangzhou 510632, China (e-mail: dhcai@jnu.edu.cn).}% <-this % stops a space
\thanks{X. Lei is with the School of Information Science and Technology, Southwest Jiaotong University, Chengdu 610031, China (e-mail:
xflei@swjtu.edu.cn).}
\thanks{Y. Xu is with the School of Science and Engineering, The Chinese University of Hong Kong, Shenzhen 518172, China (e-mail: xuyanqing@cuhk.edu.cn).}
\thanks{Z. Ding is with the Department of Computer and Information Engineering, Khalifa University, Abu Dhabi, UAE. (e-mail: zhiguo.ding@manchester.ac.uk).}
\thanks{P. Fan is with the Key Lab
of Information Coding and Transmission, Southwest Jiaotong University,
Chengdu 610031, China (e-mail: pzfan@swjtu.edu.cn).}

%\IEEEauthorblockA{\IEEEauthorrefmark{1}\normalsize{College of Information Science and Technology, Jinan University,
% Guangzhou, China}}
% %\IEEEauthorblockA{\IEEEauthorrefmark{2}2012 Communication Technology Lab, Huawei Technologies.
% %Shanghai 201206, P. R. China.}
%
%\IEEEauthorblockA{\IEEEauthorrefmark{2}\normalsize{National Mobile Communications
%Research Laboratory, Southeast University, Nanjing, China}}

%\IEEEauthorblockA{
%\normalsize{Email: qiumingyihh@163.com, \{dhcai,zqliu,kuncao,chenyy\}@jnu.edu.cn}, ninggao@seu.edu.cn}
}
 \vspace{1.5em}
% \author{ \ \
%\author{

\maketitle
\vspace{-2.0em}

\begin{abstract}

In this paper, we consider a novel optimization design for multi-waveguide pinching-antenna systems, aiming to maximize the weighted sum rate (WSR) by jointly optimizing beamforming coefficients and antenna position. To handle the formulated non-convex problem, a gradient-based meta-learning joint optimization (GML-JO) algorithm is proposed. Specifically, the original problem is initially decomposed into two sub-problems of beamforming optimization and antenna position optimization through equivalent substitution. Then, the convex approximation methods are used to deal with the nonconvex constraints of sub-problems, and two sub-neural networks are constructed to calculate the sub-problems separately. Different from alternating optimization (AO), where two sub-problems are solved alternately and the solutions are influenced by the initial values, two sub-neural networks of proposed GML-JO with fixed channel coefficients are considered as local sub-tasks and the computation results are used to calculate the loss function of joint optimization. Finally, the parameters of sub-networks are updated using the average loss function over different sub-tasks and the solution that is robust to the initial value is obtained. Simulation results demonstrate that the proposed GML-JO algorithm achieves 5.6 bits/s/Hz WSR within 100 iterations, yielding a 32.7\% performance enhancement over conventional AO with substantially reduced computational complexity. Moreover, the proposed GML-JO algorithm is robust to different choices of initialization and yields better performance compared with the existing optimization methods.

\end{abstract}
 \vspace{-0.5em}
\begin{IEEEkeywords}
Pinching-antenna, weighted sum rate, joint optimization, gradient-based meta-learning.
\end{IEEEkeywords}

 \vspace{-0.5em}
\section{Introduction}
\vspace{0.5em}

With the accelerated evolution of wireless communication technologies from 5G to 6G, emerging scenarios such as ultra-dense networks, massive machine-type communication (mMTC) \cite{01}, and ultra-reliable low-latency communication (URLLC) \cite{02} require higher spectral efficiency, more connections and higher quality of service (QoS) \cite{03,04,05}. One of the key enabling technologies of 5G and 6G is massive multiple-input multiple-output (MIMO) \cite{06,07,0008} which deployed a few hundred antennas to significantly increase system capacity and spectral efficiency. A uniform planar array antenna structure is considered in  frequency division duplexing massive MIMO systems \cite{08} because it can deploy a large number of antennas within a relatively small area. Moreover, extremely large-scale MIMO (XL-MIMO) improves the system performance by deploying extremely large number of antennas in a compact space\cite{09}. It is important to point out that the complexity of signal processing is extremely high due to the large-scale antenna array. A distributed XL-MIMO with local and central processing units is adopted to reduce the computation overhead \cite{100}, which can approach the performance of centralized antenna deployment. However, distributed XL-MIMO achieve spatial flexibility through the static combination of a large number of antennas.

Recently, flexible antenna systems, including fluid-antenna \cite{11}, movable-antenna \cite{12}, and pinching-antenna \cite{13,0015} have received extensive attention. Fluid antenna utilizes the plasticity and controllability of fluids, such as liquid metals or conductive fluids, i.e., we can be adjusted the shape and position of the fluid to mitigate channel fading\cite{11}. Similarly, movable antenna optimizes radiation direction and coverage through physical repositioning \cite{12}. However, the movement ranges of fluid and movable antennas are limited within a few wavelengths\cite{0012}. Furthermore, the deployment of fluid antennas and removable antennas is costly and lack of the antenna array reconfigurability. Pinching-antenna systems has emerged as a novel solution of flexible antennas, leveraging the dynamic adjustment of dielectric particle positions on waveguides to reconfigure electromagnetic field distributions\cite{0013,0014}. In \cite{0015}, pinching-antenna systems was explored to multi-user environments, where multiple users can be better served with established scalable strong line-of-sight (LoS) links via antenna position adjustment \cite{0013}. In\cite{13}, a beamforming design of pinching-antenna systems was considered to enhance signal directionality, which was achieved by dynamically adjusting the positions of pinching-antenna. Meanwhile, \cite{0016} revealed that LoS blocking in multi-user environments may be beneficial, which utilizes obstacles to block interfering signals through waveguide layouts, thereby suppressing co-channel interference.

From the optimization perspective, Wang et al. considered the continuous activation and discrete activation modes of pinching antennas, employing a penalty-based alternating optimization (AO) algorithm to address the problem of transmitting power minimization\cite{15}. Xu et al. \cite{16} employed a two-stage optimization strategy to maximize sum rates, i.e., the first stage optimizes the antenna position to minimize large-scale path loss, and the second stage refines the position to maximize the received signal strength. \cite{17} proposed a branch-and-bound algorithm and a many-to-many matching algorithm to minimize transmission power by jointly optimizing transmit beamforming, pinching beamforming, and the number of activated pinching-antennas. Particularly, \cite{19} employed a joint beamforming design with dual time scales, which decouples the total rate maximization problem into two subproblems using the dual decomposition method and employed the Karush-Kuhn-Tucker conditions and stochastic continuous convex approximation methods to address non-convex subproblems. Xu et al. \cite{520} designed a non-orthogonal multiple access system with pinching-antennas, where AO and successive convex approximation were used to joint optimize the transmission data under QoS requirements. Furthermore, pinching-antenna systems have been applied in various communication systems. For example, \cite{20} implemented pinching-antenna systems in over-the-air computation systems, where a mean squared error minimization problem was sovled by jointly optimizing the pinching-antenna position, transmit power, and decoding vector. Furthermore, deep learning algorithms have been applied in pinching-antenna systems. Guo et al. \cite{21} pioneered the use of graph neural networks (GNNs) for end-to-end beamforming learning, which realized a joint learning of pinching beamforming and transmission beamforming. Addressing channel estimation challenges of pinching-antenna systems, \cite{22} proposed pinching-antenna mixture of experts and pinching-antenna transformer models that leverage mixture-of-experts and self-attention mechanisms, surpassing traditional optimisation methods in scalability and zero-shot learning capability. Zhou et al. \cite{521} proposed a novel channel estimation for mmWave pinching antenna systems, where only a few pinching antennas are activates to reduce antenna switching and pilot overhead costs,  thereby achieving high precision estimation of multipath propagation parameters.

In this paper, we consider a multi-waveguide multi-user pinching-antenna systems and formulate an optimization problem for maximizing the weighted sum rate (WSR). The formulated maximizing WSR problem is non-convex and is challenge to be solved due to the coupling between the beamforming and antenna positions. To handle this problem, we propose a gradient-based meta-learning joint optimization (GML-JO) algorithm. Unlike the AO methods, we first decompose the objective function of original problem into two subproblems including beamforming optimization and antenna position optimization by Lagrangian duality and quadratic transforms. Then the non-convex subproblems are transformed into tractable convex forms using convex approximation techniques, and two sub-networks are employed to handle the sub-tasks. Finally, the sub-problem with one fixed channel is considered as a computation sub-task and the the average loss is computed over multiple sub-tasks to obtain the average loss for training the network’s model parameters. The contributions of this paper are summarized as follows:
\begin{itemize}
\item  An optimization problem is formulated to maximize the WSR of a multi-waveguide multi-user pinching-antenna systems. Through Lagrangian duality and quadratic transformations, we perform an equivalent transformation of the original objective function and decompose it into two subproblems, i.e., beamforming optimization and antenna position optimization. Additionally, using convex approximation techniques, we transform the non-convex sub-problems into solvable convex forms.

\item  Furthermore, we propose a GML-JO algorithm to solve the problem. Especially, we design an unfolded neural network to efficiently compute the solutions alternately for each sub-problem. Different from AO, where the obtained feasible solution is impacted by the initial value, solving each sub-problem of our proposed GML-JO with a fixed channel is first regarded as the calculation of a sub-task, and then a solution that is robust to the initial value can be obtained.

\item The designed unfolded neural network of proposed GML-JO is used to calculate each sub-task alternately. And the objective function of original problem with coupling variables is defined as loss function for joint optimization. With the computation results from multiple sub-tasks, the average loss is obtained and used to update the model parameters of two sub-network, thereby improving the overall performance of the solutions for sub-problems.

\end{itemize}

Experimental results demonstrate that the proposed GML-JO algorithm achieves rapid convergence and 5.6 bits/s/Hz WSR is obtained within 100 iterations, delivering a 32.7\% performance enhancement compared to the benchmark AO method. Moreover, the solution of proposed GML-JO is more robust with respect to the initial values.

The paper is organized as follows: Section II introduces the system model and problem formulation. Section III describes the equivalent transformation of the original problem and its AO processing. Section IV presents the proposed GML-JO algorithm. Section V presents the numerical results, and Section VI concludes this paper finally.

\textit{Notations:} $a, \mathbf{a}, \mathbf{A}$ denote scalar, vector and matrix, respectively. $\diag(\mathbf{a})$ is the diagonal
matrix with diagonal elements $\mathbf{a}$. $\mathcal{CN}(\mu,\sigma^2)$ denotes complex Gaussian random variable with mean $\mu$ and variance $\sigma^2$.

\section{System Model}
\vspace{1.0em}

\begin{figure}[tp]
  \centering
\includegraphics[width=0.5\textwidth]{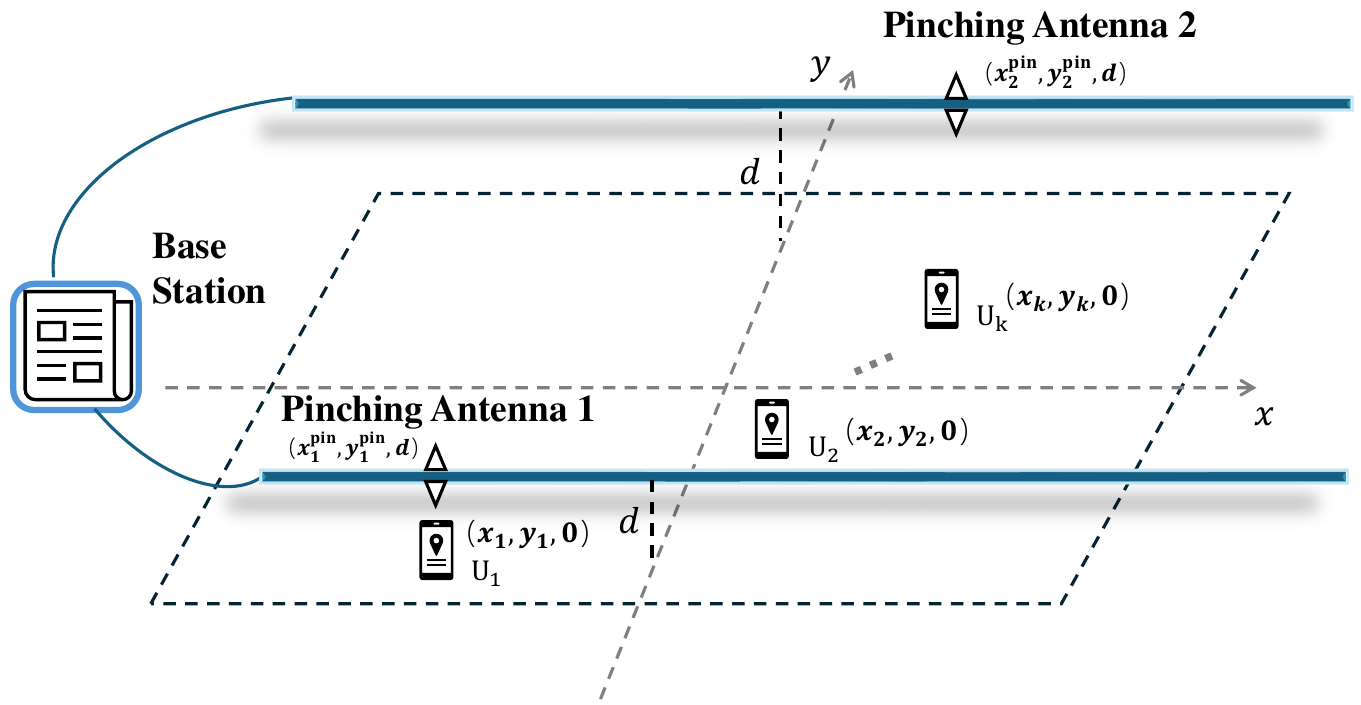}
  \caption{Downlink multi-user multi-waveguide pinching-antenna systems.}
  \label{fig:example}
\end{figure}

As shown in Fig. 1, we consider a multi-waveguide pinching-antenna communication system, where \( K \) equal distance distributed waveguides serves $M$ single-antenna users. Without loss of generality, only a single pinching-antenna is activated on each waveguide. The position of the pinching antenna on the \( k \)-th $(k=1,2,\cdots,K)$ waveguide is denoted by \( \tilde{\psi}_k^{\text{Pin}}  =  (x_k, y_k)\), which is a two-dimensional coordinate in the predetermined service region \( S \).  The base station supports multi-user communications by passing signals through waveguides to pinching antennas. The received signal \( y_m \) at the \( m \)-th user is given by
\begin{equation}
{y_m} = \sum_{k=1}^{K} h_{m,k} p_{m,k}\sqrt{P}  {s_m}+ \sum_{k=1}^{K} \sum_{i \neq m} h_{m,k} p_{i,k} \sqrt{P}{s_i}+ n,
\end{equation}
where \( h_{m,k} \) denotes the channel gain between the \( m \)-th user and the pinching-antenna on the \( k \)-th waveguide, \( p_{m,k} \) is the the beamforming coefficient \cite{0015} containing both the amplitude and phase
information for the desired signal \( s_m \) of the \( m \)-th user from the $k$-th waveguide with $\mathbb{E}[s^Hs]=1$, and \( n \sim \mathcal{CN}(0, \sigma^2) \) is the additive white Gaussian noise. Especially, the channel gain \( h_{{m,k}} \) combines the multiple effects of large-scale path loss and phase variation due to in-waveguide propagation, defined as
\begin{equation}\label{lsod}
h_{m,k}
= \frac{
\sqrt{\eta \,} e^{-2\pi j \Bigl(\frac{1}{\lambda}\,\bigl|\psi_m - \tilde{\psi}_k^{\mathrm{Pin}}\bigr|
+ \frac{1}{\lambda_g}\,\bigl|\psi_{0,k}^{\mathrm{Pin}} - \tilde{\psi}_k^{\mathrm{Pin}}\bigr|\Bigr)}}
{
\bigl|\psi_m - \tilde{\psi}_k^{\mathrm{Pin}}\bigr|
},
\end{equation}
where \( \eta \) is the attenuation factor, \( \lambda \) and \( \lambda_g \) denote the wavelength of the signal in free space and waveguide, respectively; \( \psi_m \) and \( \psi_k^{\text{Pin}} \) denote the positions of the pinching antenna on the \( m \)-th user and \( k \)-th waveguide, respectively; and \( \psi_{0,k}^{\text{Pin}} \) is the reference position of the \( k \)-th pinching-antenna. Then the signal-to-interference-plus-noise ratio (SINR) for the \( m \)-th user can be defined as
\begin{equation}\label{sldor}
\text{SINR}_m = \frac{{P}\left| \sum_{k=1}^{K} h_{m,k} p_{m,k} \right|^2}{ \sum_{i \neq m} {P}\left| \sum_{k=1}^{K} h_{m,k} p_{i,k} \right|^2 + \sigma^2},
\end{equation}
Let
$\mathbf{p}_m = \begin{bmatrix} p_{m,1}, \cdots, p_{m,K} \end{bmatrix}^T,
\mathbf{h}_m = \begin{bmatrix} h_{m,1}, \cdots, h_{m,K} \end{bmatrix}^T,$
the SINR in \eqref{sldor} can be rewritten as
\begin{equation}
\text{SINR}_m
= \frac{ {P}\bigl|\mathbf{h}_m^H \mathbf{p}_m\bigr|^2}
       { {P}\sum_{i \neq m} \bigl|\mathbf{h}_m^H \mathbf{p}_i\bigr|^2 + \sigma^2}.
\end{equation}

It is important to point out that the exponential term in \eqref{lsod} includes both the phase shifts of the signal propagating in free space and inside the waveguide.
A WSR optimization problem can be formalized as
\begin{equation}
\max_{\mathbf{d}_k,\mathbf p_m} \sum_{m=1}^{M} {w_m} \log_2 \left( 1 + \text{SINR}_m \right),
\end{equation}
where \( \omega_m \in [0,1] \) denotes the fairness weights assigned to the $m$-th user, and \(\mathbf d_k \) denotes the position vector of the \( k \)-th antenna. Assuming that the waveguide is fixed along the y-axis and the antenna positions are adjusted only along the x-axis. Define \(\mathbf d = [x_1, x_2,...,x_k]^T \) with the position of the \( k \)-th antenna \( x_k \), and the optimization problem can be rewritten as
\begin{subequations}\label{mosd2}
\begin{align}
& \max_{\mathbf{d}, \mathbf{p}_m} \sum_{m=1}^{M} w_m \log_2 \left( 1 + \frac{ {P}\left| \mathbf{h}_m^H\mathbf{p}_m \right|^2}{ \sum_{n \neq m} {P}\left| \mathbf{h}_m^H\mathbf{p}_n \right|^2 + \sigma^2} \right) \\
&\mathrm{s.t. }\quad \sum_{m=1}^{M} \|\mathbf{p}_m\|^2 \leq P_{\text{total}}, \label{sldo61}\\
&\quad\quad \quad {d_k}\in D , \quad \forall k = 1, 2, \dots, K, \label{sld32}\\
& \quad\quad\quad \text{SINR}_m \geq \gamma_m^{\min}, \quad \forall m = 1, 2, \dots, M, \label{stre71}
\end{align}
\end{subequations}
where \( \gamma_m^{\text{min}} \) denotes the user's minimum quality of service;
constraint \eqref{sldo61} denotes the total transmit power limit; constraint \eqref{sld32} requires the position of each pinching-antenna in a predetermined region \( S \), and constraint \eqref{stre71} is quality of service condition. The optimization problem \eqref{mosd2} is challenging due to its non-convexity and the strong coupling between the optimization parameters.

%\vspace{-0.5em}
\section{Alternate Optimization of Original Problem}
\vspace{1.0em}

In this section, we first approximate the original optimization problem by introducing an auxiliary variable to turn it, and then combine it with the quadratic transformation theorem to reconstruct the original problem into an easy-to-handle equivalent optimization form, and further solve the optimization problem by using the AO method.

\subsection{Equivalent Transformation of Original Problem \eqref{mosd2}}

To deal with the nonconvexity of the original WSR maximization problem \eqref{mosd2}, we adopt the Lagrangian dyadic transformation method to transform problem \eqref{mosd2} into an equivalent optimization problem by introducing auxiliary variables \cite{043}. In particular, with the help of the introduced auxiliary variable, the rate $\log_2(1+\mathrm{SINR}_m)$ in the objective function of problem \eqref{mosd2} can be expressed as the optimization problem in \textbf{Theorem 1}.
\begin{theorem}
With the $\text{SINR}_m$ of user $m$, the rate $\log_2(1+\mathrm{SINR}_m)$ can  be formulated as
\begin{align}\label{sldo2}
\log_2(1 + \text{SINR}_m) &= \max_{\gamma_m \geq 0} \left[ \log_2(1 + \gamma_m) - \frac{\gamma_m}{\ln 2} \right. \nonumber\\
&\quad + \left. \frac{(1 + \gamma_m) \text{SINR}_m}{\ln 2 (1 + \text{SINR}_m)} \right].
\end{align}
\end{theorem}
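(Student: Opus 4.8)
The plan is to read the right-hand side as a maximization of a smooth, strictly concave function of the single auxiliary variable $\gamma_m$ (subject only to $\gamma_m \ge 0$), and to show that its optimal value collapses exactly to $\log_2(1+\text{SINR}_m)$. Throughout I would treat $\text{SINR}_m$ as a fixed nonnegative constant and define the bracketed objective
\begin{equation}
g(\gamma_m) = \log_2(1+\gamma_m) - \frac{\gamma_m}{\ln 2} + \frac{(1+\gamma_m)\,\text{SINR}_m}{\ln 2\,(1+\text{SINR}_m)}.
\end{equation}

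First I would establish concavity, so that any stationary point is automatically the global maximizer. Differentiating twice gives $g''(\gamma_m) = -\tfrac{1}{\ln 2\,(1+\gamma_m)^2} < 0$ for all $\gamma_m \ge 0$, hence $g$ is strictly concave and its first-order condition pins down the unique optimum. Setting $g'(\gamma_m)=0$ reduces, after cancelling the common $\tfrac{1}{\ln 2}$ factors, to $\frac{1}{1+\gamma_m} = \frac{1}{1+\text{SINR}_m}$, so the stationary point is $\gamma_m^\star = \text{SINR}_m$. Since $\text{SINR}_m \ge 0$, this point lies in the feasible set $\gamma_m \ge 0$, making the sign constraint inactive and confirming that $\gamma_m^\star$ is the constrained maximizer as well.

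Finally I would substitute $\gamma_m^\star = \text{SINR}_m$ back into $g$. The two non-logarithmic terms cancel, because $\frac{(1+\text{SINR}_m)\,\text{SINR}_m}{\ln 2\,(1+\text{SINR}_m)} = \frac{\text{SINR}_m}{\ln 2}$ exactly offsets $-\frac{\text{SINR}_m}{\ln 2}$, leaving $g(\gamma_m^\star) = \log_2(1+\text{SINR}_m)$, which is the claimed identity.

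Since every step is elementary calculus, there is no genuine analytical obstacle here; the only points requiring care are verifying that the stationary point lands in the feasible region $\gamma_m \ge 0$ (so the constraint may be dropped) and keeping the base-change factor $\ln 2$ consistent through the differentiation. The essential structural observation — and the reason this reformulation is useful downstream — is that $\gamma_m$ enters $g$ only through a concave logarithm plus terms that are linear in $\gamma_m$, so the inner maximization admits a closed form and the auxiliary variable serves as a tight variational surrogate for $\text{SINR}_m$, decoupling the rate from the fractional SINR expression.
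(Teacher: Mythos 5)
Your proof is correct and follows essentially the same route as the paper: define the bracketed objective, verify strict concavity via the second derivative, solve the first-order condition to obtain $\gamma_m^{*} = \mathrm{SINR}_m$, and substitute back to recover $\log_2(1+\mathrm{SINR}_m)$. The only differences are cosmetic — you check concavity before the stationarity condition and explicitly note that $\gamma_m^{*} \ge 0$ keeps the constraint inactive (a detail the paper leaves implicit), and your second derivative correctly retains the $1/\ln 2$ factor that the paper drops, though neither affects the conclusion.
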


\begin{proof}
We first define an auxiliary function as
\begin{equation}
f(\gamma_m) = \log_2(1 + \gamma_m) - \frac{\gamma_m}{\ln 2} + \frac{(1 + \gamma_m) \text{SINR}_m}{\ln 2(1 + \text{SINR}_m)}.
\end{equation}

Then the derivative of $f(\gamma_m)$ with respect to the variable $\gamma_m$ is given by
\begin{equation}
\frac{\partial f}{\partial \gamma_m} = \frac{1}{\ln 2} \cdot \frac{1}{1 + \gamma_m} - \frac{1}{\ln 2} + \frac{\text{SINR}_m}{\ln 2 (1 + \text{SINR}_m)}.
\end{equation}
Let the derivative be zero, we have
\begin{equation}\label{lspd}
\frac{1}{1 + \gamma_m} - 1 + \frac{\text{SINR}_m}{1 + \text{SINR}_m} = 0.
\end{equation}
By further simplifying \eqref{lspd}, results in
\begin{equation}
\frac{1}{1 + \gamma_m} = 1 - \frac{\text{SINR}_m}{1 + \text{SINR}_m} = \frac{1}{1 + \text{SINR}_m}.
\end{equation}
Furthermore, the second-order derivative of $f(\gamma_m)$ with respect to the variable $\gamma_m$ is $\frac{\partial^2 f}{\partial \gamma_m^2}=-\frac{1}{(1 + \gamma_m)^2}<0$. Thus, the optimal solution of problem \eqref{sldo2} can be expressed as
\begin{equation}\label{sldow21}
\gamma_m^* = \text{SINR}_m.
\end{equation}
Substituting the optimal value $\gamma_m^*$ into $f(\gamma_m)$, we have
\begin{align}
f(\text{SINR}_m) &= \log_2(1 + \text{SINR}_m) - \frac{\text{SINR}_m}{\ln 2} \nonumber\\
&\quad + \frac{(1 + \text{SINR}_m) \text{SINR}_m}{\ln 2 (1 + \text{SINR}_m)} \nonumber\\
&= \log_2(1 + \text{SINR}_m) - \frac{\text{SINR}_m}{\ln 2} + \frac{\text{SINR}_m}{\ln 2} \nonumber\\
&= \log_2(1 + \text{SINR}_m).
\end{align}
\end{proof}
\begin{remark}
By using the theorem, the data rate $\log_2(1+\mathrm{SINR}_m)$ which is the objective function in problem \eqref{mosd2} can be equivalently expressed as optimization problem \eqref{sldo2} by introducing variable $\gamma_m$. Especially, $\log_2(1+\mathrm{SINR}_m)$ can be obtained at the optimal $\gamma_m^{*}=\mathrm{SINR}_m$.
\end{remark}

However, the fractional term $\frac{\text{SINR}_m}{1 + \text{SINR}_m}$ in \eqref{sldo2} is still difficult to optimize directly. By introducing another auxiliary variable, fractional term $\frac{\text{SINR}_m}{1 + \text{SINR}_m}$ can be further recast with quadratic transformation \cite{044} shown in following theorem.
\begin{theorem}
For any $\text{SINR}_m \geq 0$, the fractional term $\frac{\text{SINR}_m}{1 + \text{SINR}_m}$ in \eqref{sldo2} can be formulated as
\begin{equation}
\begin{aligned}
\frac{\text{SINR}_m}{1 + \text{SINR}_m} &= \max_{y_m \geq 0} \left[ 2 y_m \sqrt{G_m} - y_m^2 (G_m + I_m + \sigma^2) \right].
\end{aligned}
\end{equation}
\end{theorem}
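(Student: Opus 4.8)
The plan is to recognize that the right-hand side is nothing but the standard quadratic transform applied to the ratio $\frac{G_m}{G_m + I_m + \sigma^2}$, so I would first rewrite the left-hand side into exactly that form. Writing $G_m = P\bigl|\mathbf{h}_m^H \mathbf{p}_m\bigr|^2$ for the desired signal power and $I_m = P\sum_{i \neq m}\bigl|\mathbf{h}_m^H \mathbf{p}_i\bigr|^2$ for the aggregate interference, the SINR in \eqref{sldor} reads $\text{SINR}_m = G_m/(I_m + \sigma^2)$. Substituting this into $\frac{\text{SINR}_m}{1 + \text{SINR}_m}$ and clearing the common factor $I_m + \sigma^2$ immediately yields $\frac{\text{SINR}_m}{1 + \text{SINR}_m} = \frac{G_m}{G_m + I_m + \sigma^2}$. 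It therefore suffices to show that this last ratio equals the stated maximization over $y_m$.

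Next, mirroring the argument of \textbf{Theorem 1}, I would treat the bracketed expression as an auxiliary function $g(y_m) = 2 y_m \sqrt{G_m} - y_m^2 (G_m + I_m + \sigma^2)$ of the single variable $y_m$ and maximize it. Since the noise variance $\sigma^2 > 0$ guarantees $G_m + I_m + \sigma^2 > 0$, the coefficient of $y_m^2$ is strictly negative, so $g$ is strictly concave; indeed its second derivative $g''(y_m) = -2(G_m + I_m + \sigma^2) < 0$ confirms that any stationary point is the unique global maximum. Setting $g'(y_m) = 2\sqrt{G_m} - 2 y_m (G_m + I_m + \sigma^2) = 0$ then gives the maximizer $y_m^{*} = \frac{\sqrt{G_m}}{G_m + I_m + \sigma^2}$.

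Finally, I would note that $y_m^{*} \geq 0$, so the nonnegativity constraint in the maximization is inactive and the unconstrained optimum is admissible, and then substitute $y_m^{*}$ back into $g$. The two resulting terms are $\frac{2 G_m}{G_m + I_m + \sigma^2}$ and $-\frac{G_m}{G_m + I_m + \sigma^2}$, which combine to $\frac{G_m}{G_m + I_m + \sigma^2}$, matching the expression obtained in the first step. I do not expect a genuine obstacle here, as the quadratic transform reduces everything to a one-dimensional concave maximization solved in closed form; the only point requiring care is the bookkeeping that identifies $G_m$ and $I_m$ with the SINR numerator and interference power so that the simplification $\frac{\text{SINR}_m}{1 + \text{SINR}_m} = \frac{G_m}{G_m + I_m + \sigma^2}$ is exact, together with the observation that $\sigma^2 > 0$ is what makes the quadratic coefficient strictly negative and hence legitimizes the first-order condition.
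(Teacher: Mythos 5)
Your proposal is correct and follows essentially the same route as the paper's own proof: rewrite $\frac{\text{SINR}_m}{1+\text{SINR}_m}$ as $\frac{G_m}{G_m+I_m+\sigma^2}$, observe that the bracketed quadratic in $y_m$ is strictly concave, solve the first-order condition to get $y_m^{*} = \frac{\sqrt{G_m}}{G_m+I_m+\sigma^2}$, and substitute back. If anything, your bookkeeping is slightly tighter than the paper's, since you absorb the transmit power $P$ into $G_m$ and $I_m$ so that the identity $\text{SINR}_m = G_m/(I_m+\sigma^2)$ holds exactly, and you explicitly note that the constraint $y_m \geq 0$ is inactive at the optimum.
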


\begin{proof}
Let $G_m = |\mathbf{h}_m^H \mathbf{p}_m|^2$ and $I_m = \sum_{n \ne m} |\mathbf{h}_m^H \mathbf{p}_n|^2$, then fractional term $\frac{\text{SINR}_m}{1 + \text{SINR}_m}$ can be expressed as
\begin{equation}
\frac{\text{SINR}_m}{1 + \text{SINR}_m} = \frac{G_m}{G_m + I_m + \sigma^2}.
\end{equation}
According to the quadratic transformation \cite{044}, we define a function $f(y_m)$ as
\begin{equation}
f(y_m) = 2 y_m \sqrt{G_m} - y_m^2 (G_m + I_m + \sigma^2).
\end{equation}
The derivative of $f(y_m)$ can be obtained by
\begin{equation}
f'(y_m) = 2 \sqrt{G_m} - 2 y_m (G_m + I_m + \sigma^2).
\end{equation}
Let the derivative be zero, i.e.,
\begin{equation}
2 \sqrt{G_m} - 2 y_m (G_m + I_m + \sigma^2) = 0,
\end{equation}
then, we have
\begin{equation}\label{tr43}
y_m^* = \frac{\sqrt{G_m}}{G_m + I_m + \sigma^2}.
\end{equation}
Since $G_m > 0$ and $G_m+I_m + \sigma^2 > 0$, thus the condition $y_m > 0$ is satisfied. To verify that $y_m^*$ is the maximum point, the second-order derivative is calculated by

\begin{equation}
f''(y_m) = \frac{d}{d y_m} \left( 2 \sqrt{G_m} - 2 y_m (G_m + I_m + \sigma^2) \right),
\end{equation}
and
\begin{equation}
f''(y_m) = -2 (G_m + I_m + \sigma^2) < 0.
\end{equation}
By substituting $y_m^*$ into $g(y_m)$, we have
\begin{align}
g(y_m^*) &= 2 \left( \frac{\sqrt{G_m}}{G_m + I_m + \sigma^2} \right) \sqrt{G_m} \nonumber\\
&\quad - \left( \frac{\sqrt{G_m}}{G_m + I_m + \sigma^2} \right)^2 (G_m + I_m + \sigma^2)\nonumber\\
&= 2 \cdot \frac{G_m}{G_m + I_m + \sigma^2} - \frac{G_m}{G_m + I_m + \sigma^2} \nonumber\\
          &= \frac{G_m}{G_m + I_m + \sigma^2} \nonumber\\
          &= \frac{\text{SINR}_m}{1 + \text{SINR}_m}.
\end{align}
\end{proof}
Based on Theorem 1 and Theorem 2, the original problem \eqref{mosd2} is converted to
\begin{subequations}\label{mosd5}
\begin{align}
& \max_{\mathbf{d}, \mathbf{p}_m, \gamma_m, y_m} \sum_{m=1}^{M} w_m \left[ \log_2(1 + \gamma_m) - \frac{\gamma_m}{\ln 2} + \frac{1 + \gamma_m}{\ln 2} \right. \nonumber\\
&\quad\quad\quad \left( 2 y_m \sqrt{G_m} - y_m^2 (G_m + I_m + \sigma^2) \right), \label{sldp21} \\
&\mathrm{s.t.} \quad \sum_{m=1}^{M} \|\mathbf{p}_m\|^2 \leq P_{\text{total}}, \label{sldo14} \\
&\quad\quad \quad d_k \in D, \quad \forall k = 1, 2, \dots, K, \label{sld14} \\
& \quad\quad\quad \text{SINR}_m \geq \gamma_m^{\min}, \quad \forall m = 1, 2, \dots, M, \label{stre14}\\
&\quad\quad\quad\gamma_m \geq 0, \quad y_m \geq 0, \quad \forall m = 1, 2, \dots, M. \label{stre15}
\end{align}
\end{subequations}

\textbf{Theorem 1} and \textbf{Theorem 2} provide an equivalence transformation of the original optimization problem \eqref{mosd2} by introducing variables $\gamma_m$ and $y_m$. Note that the objective function \eqref{sldp21} can be easier handled compared to the objective function \eqref{sldo61}. However, problem \eqref{mosd5} is still difficult to deal with because of the constraints on variables $d_k$ and $\mathbf{p}_m$.

\subsection{Alternate Optimization of Problem \eqref{mosd5}}

It is important to point out that problem \eqref{mosd5} is still nonconvex due to the coupling of variables $\mathbf{p}_m$, $\mathbf{d}$, $\gamma_m$, and $y_m$. Thus, we solve problem \eqref{mosd5} alternately, including the auxiliary variables update and subproblems update.
\subsubsection{Auxiliary variables update}
The auxiliary variables $\gamma_m$ and $y_m$ can be obtained by \eqref{sldow21} and \eqref{tr43} with given $G_m$ and $I_m$. Thus, the auxiliary variables update are given by
\begin{equation}\label{mosd520}
\gamma_m^{(t+1)} = \frac{G_m^{(t)}}{I_m^{(t)} + \sigma^2},
\end{equation}
and
\begin{equation}\label{mosd521}
y_m^{(t+1)} = \frac{\sqrt{G^{(t)}_m}}{G^{(t)}_m + I^{(t)}_m + \sigma^2},
\end{equation}
where $G^{(t)}_m$ and $I^{(t)}_m$ denote the $t$-th iteration of $G_m$ and $I_m$.

\subsubsection{{Beamforming Coefficient $\mathbf{p}_m$} update}
For given $y_m^{(t)}$, $\mathbf{d}^{(t)}$ and $\gamma_m^{(t)}$, the subproblem of problem \eqref{mosd5} on variable $\mathbf{p}_m$ can be formulated as
\begin{subequations}\label{sldp32}
\begin{align}
& \max_{\mathbf{p}_m} \sum_{m=1}^{M} \frac{w_m (1 + \gamma_m^{(t)})}{\ln 2} \bigg[ 2y_m^{(t)} \sqrt{G_m} - \left(y_m^{(t)}\right)^2 \nonumber\\
&\quad\left(G_m + I_m + \sigma^2\right) \bigg],\label{obj2}\\
& \mathrm{s.t.} \sum_{m=1}^{M} \|\mathbf{p}_m\|^2 \leq P_{\text{total}}, \label{sldo611} \\
& \quad\mathrm{SINR}_m = \frac{G_m}{\sigma^2 + I_m} \geq \gamma_m^{\min}, \quad \forall m = 1, 2, \dots, M. \label{stre712}
\end{align}
\end{subequations}

Since $G_m$ and $I_m$ are nonconvex functions with respect to variable $\mathbf{p}_m$, the objective function and the $\mathrm{SINR}_m$ constraint in subproblem \eqref{sldp32} are nonconvex. To solve this problem, we perform a first-order Taylor expansion of $G_m$ at $\mathbf{p}_m^{(t)}$, i.e.,
\begin{equation}\label{mosd522}
\sqrt{G_m} \approx \sqrt{G_m^{(t)}} + \frac{1}{2\sqrt{G_m^{(t)}}} \text{Re} \left( \mathbf{h}_m^H ( \mathbf{p}_m - \mathbf{p}_m^{(t)} ) \right).
\end{equation}
Similarly, the linearization$ ofI_m$ can be approximated as
\begin{equation}\label{mosd523}
I_m \approx I_m^{(t)} + 2\!\sum_{n \neq m} \text{Re} \left( \left( \mathbf{p}_n^{(t)} \right)^H \!\mathbf{h}_m \mathbf{h}_m^H \left( \mathbf{p}_n - \mathbf{p}_n^{(t)} \right) \right).
\end{equation}
Then the objective function \eqref{obj2} can be approximated by
\begin{align}
&\tilde{f}_p(\mathbf{p}_m ) \triangleq \sum_{m=1}^{M} w_m \frac{1 + \gamma_m^{(t)}}{\ln 2} \Bigg[ 2y_m^{(t)} \bigg( \sqrt{G_m^{(t)}} + \frac{1}{2\sqrt{G_m^{(t)}}}\nonumber\\ &\cdot\text{Re} \left( h_m^H (\mathbf{p}_m - \mathbf{p}_m^{(t)}) \right) \bigg)- \left( y_m^{(t)} \right)^2 \left( G_m^{(t)} + I_m^{(t)} \right)\nonumber \\
&+ 2 \sum_{n \neq m} \text{Re} \left( \left( \mathbf{p}_n^{(t)} \right)^H h_m h_m^H \left( \mathbf{p}_n - \mathbf{p}_n^{(t)} \right) \right) + \sigma^2 \Bigg].
\end{align}
The $\mathrm{SINR}_m$ constraint \eqref{stre712} can be approximated by
\begin{equation}
\begin{aligned}\label{sinre2}
G_m - \gamma_{\min}^{m} \Big( & I_m^{(t)} + 2 \sum_{n \neq m} \text{Re} \left( \left( \mathbf{p}_n^{(t)} \right)^H \mathbf{h}_m \mathbf{h}_m^H \left( \mathbf{p}_n - \mathbf{p}_n^{(t)} \right) \right) \\
& \quad \quad + \sigma^2 \Big) \geq 0.
\end{aligned}
\end{equation}
Therefore, the subproblem \eqref{sldp32} can be approximated by
\begin{subequations}\label{lotr}
\begin{align}
& \max_{\mathbf{p}_m} \tilde{f}_p(\mathbf{p}_m )\\
&\mathrm{s.t.} \quad\eqref{sldo611}, \eqref{sinre2}.
\end{align}
\end{subequations}

Note that problem \eqref{lotr} is a convex optimization problem, which can be easily solved by CVX. To facilitate the joint optimization of problem \eqref{mosd5} in the next section, a neural network will be designed to solve subproblem \eqref{lotr}. Specially, we obtain the following subproblem by ignoring constraint \eqref{sldo611}, i.e.,
\begin{subequations}\label{lotr22}
\begin{align}
& \max_{\mathbf{p}_m} \tilde{f}_p(\mathbf{p}_m )\\
&\mathrm{s.t.}~~~ \eqref{sinre2}.
\end{align}
\end{subequations}
Utilizing the quadratic penalty function method, constrained problem \eqref{lotr22} is transformed into an unconstrained problem, i.e.,
\begin{equation}\label{oier5}
\max_{\mathbf{p_m} }\tilde{f}_p^{\text{new}}(\mathbf{p}_m ) = \tilde{f}_p(\mathbf{p}_m ) - \mu \sum_{m=1}^{M} V_m^2(\mathbf{p}_m ),
\end{equation}
where $\mu>0$ is the penalty factor, and
\begin{equation}
V_m = \max \left( 0, \gamma_m^{\min} I_m^{\text{approx}} - G_m \right)
\end{equation}
with
\begin{equation}
I_m^{\text{app}} = I_m^{(t)} + 2 \sum_{n \neq m} \text{Re} \left( \left( \mathbf{p}_n^{(t)} \right)^H \mathbf{h}_m \mathbf{h}_m^H \left( \mathbf{p}_n - \mathbf{p}_n^{(t)} \right) \right) + \sigma^2.
\end{equation}

It is important to point out that the objective function $\tilde{f}_p^{\text{new}}(\{ \mathbf{p}_m \})$ is differentiable and its gradient is given by
\begin{equation}
\nabla_{\mathbf{p_m}} \tilde{f}_p^{\text{new}} = \nabla_{\mathbf{p_m}} \tilde{f}_p - 2\mu \sum_{m=1}^{M} V_m \cdot \nabla_{\mathbf{p_m}} V_m.
\end{equation}
Then the solution of \eqref{oier5} is obtained by
\begin{equation}\label{etl32}
\mathbf{p}_m^{(t+1)} = \mathbf{p}_m^{(t)} + \eta \nabla_{\mathbf{p_m}} \tilde{f}_p^{\text{new}},
\end{equation}
where $\eta$ is the learning rate. To satisfy the constraint \eqref{sldo611} in \eqref{lotr}, the updated $\mathbf{p_m}^{(t+1)}$ in \eqref{etl32} is then normalized by
\begin{equation}\label{rtle3}
\mathbf {p}_m^{(t+1)} = \frac{\sqrt{P_{\text{total}}}}{\sqrt{\sum_{m=1}^{M} \left| \mathbf {p}_m^{(t+1)} \right|^2}} \, \mathbf {p}_m^{(t+1)}.
\end{equation}

\subsubsection{{Antenna Position $\mathbf{d}$ update}} With fixing $y_m^{(t)}$, $\mathbf{p}_m^{(t+1)}$ and $\gamma_m^{(t)}$, the subproblem of problem \eqref{mosd5} on variable $\mathbf{d}$ can be formulated as

\begin{subequations}\label{lotr5}
\begin{align}
& \max_{\{\mathbf{d}\}} {f}(\mathbf{d}) \triangleq\sum_{m=1}^{M} \frac{w_m (1 + \gamma_m^{(t)})}{\ln 2} \bigg[ 2y_m^{(t)} \sqrt{G_m} - \left(y_m^{(t)}\right)^2 \nonumber\\
&\quad\quad\left(G_m + I_m + \sigma^2\right) \bigg],\\
&\mathrm{s.t.} \quad d_k \in D, \quad \forall k = 1, 2, \dots, K. \label{lert3}
\end{align}
\end{subequations}

To maximize $f(\mathbf{d})$, the gradient descent method is considered and the gradient of the objective function is given by
\begin{equation}\label{tre45}
\nabla_{\mathbf{d}} f(\mathbf{d}) = \left[ \frac{\partial f}{\partial d_1}, \frac{\partial f}{\partial d_2}, \dots, \frac{\partial f}{\partial d_K} \right]^T.
\end{equation}

Note that the derivation of $f(\mathbf{d})$ needs to be combined with the chain rule, and the analytical expression cannot be obtained. With the calculation of the gradient $\nabla_{\mathbf{d}} f(\mathbf{d})$, the update of $\mathbf{d}$ is given by
\begin{equation}
{\mathbf{d}}^{(t+1)} =  {\mathbf{d}}^{(t)} + \alpha \nabla_{\mathbf{d}} f \left( {\mathbf{d}}^{(t)} \right),
\end{equation}
where $\alpha$ is the step size. To ensure that $\mathbf{d}$ satisfies the constraint \eqref{lert3}, the updated $\mathbf{d}^{(t+1)}$ needs to be projected to the feasible set $D$, i.e.,
\begin{equation}\label{rty56}
{\mathbf{d}}^{(t+1)} = \text{proj}_D \left( {\mathbf{d}}^{(t)} + \alpha \nabla_{\mathbf{d}} f \left( {\mathbf{d}}^{(t)} \right) \right).
\end{equation}
\begin{remark}
The calculation and projection of the gradient in \eqref{tre45} are difficult to achieve in the traditional gradient descent algorithm. Thus, we can expand the gradient descent algorithm into a deep neural network and implement it based on Pytorch.
\end{remark}
\begin{remark}
The solution of problem \eqref{mosd5} by alternating updates of auxiliary variables update and the solutions of subproblems is only a feasible solution, which is affected by the initial value update.
\end{remark}

\section{Joint Optimization Based on Gradient Meta-Learning}
\vspace{1.0em}

In this section, we propose a gradient meta-learning joint optimization (GML-JO) algorithm to solve problem \eqref{mosd5} including auxiliary variables update, beam assignment vector update and pinching-antenna position vector update. In particular, beam assignment vector update and pinching-antenna position vector update are extended into two networks, namely, the beamforming network (BN) and the antenna position network (AN), which optimize the beamforming vector and the antenna positions, respectively. Moreover, the proposed GML-JO algorithm contains a three-layer loop structure, where the inner loop quickly updates the solutions of subproblems \eqref{lotr} and \eqref{lotr5}; the middle loop accumulates the loss function; and the outer loop dynamically adjusts the meta-parameters by the Adam algorithm, realizing the transition from the local fast response to the global robust convergence.

%\begin{figure}[h]
%  \centering
%  \subfigure[Proposed GML-JO Framework.]{%
%    \includegraphics[width=0.5\textwidth]{002_v6_a.pdf}%
%    \label{fig:2a}
%  }
%  \hspace{0.05\textwidth}
%  \subfigure[Proposed Subnetworks.]{%
%    \includegraphics[width=0.4\textwidth]{002_v6_b.pdf}%
%    \label{fig:2b}
%  }
% \caption{Network Deployment Architecture.}
%  \label{fig:2}
%\end{figure}

\subsection{ Beamforming Coefficient Optimization Network}
\begin{figure*}[t]
  \centering
\includegraphics[width=0.8\textwidth]{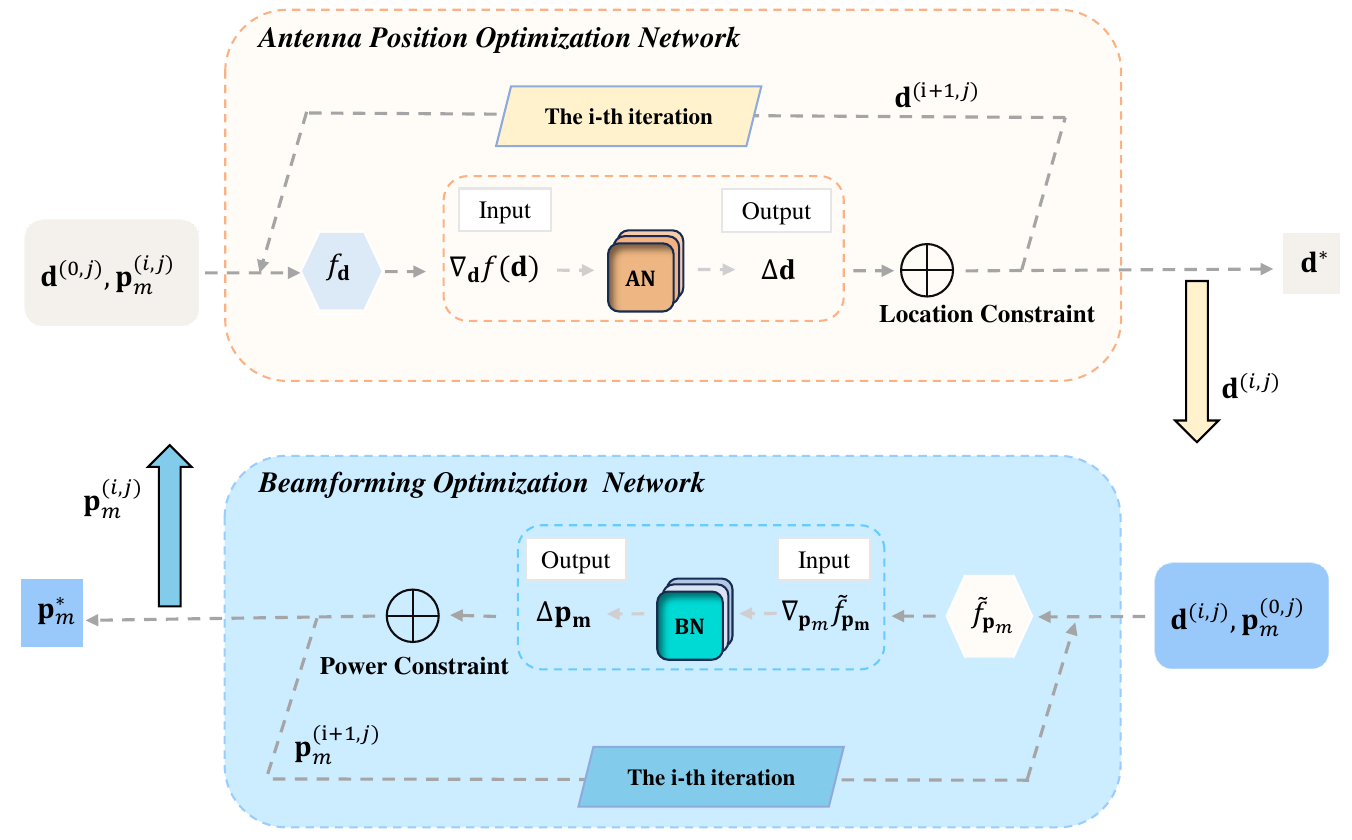}
  \caption{The sub-networks of our proposed GML-JO algorithm.}
  \label{fig:example2125}
\end{figure*}
As shown in Fig. \ref{fig:example2125}, we introduce a gradient-input unfolding network to solve subproblem \eqref{lotr} with respect to the parameter $\mathbf{p_m}$, where the input is the gradient of the objective function \eqref{oier5}. To satisfy the input of neural network, the complex gradient is first transformed into a real vector, i.e.,
\begin{equation}
\mathbf{g_p} = \left[ \text{Re} \left(\nabla_{\mathbf{p_m}} \tilde{f}_p^{\text{new}} \right)^T, \text{Im} \left( \nabla_{\mathbf{p_m}} \tilde{f}_p^{\text{new}}\right)^T \right]^T \in \mathbb{R}^{2KM}.
\end{equation}

Then three linear hidden layers are designed to learning the gradient change in \eqref{etl32}, i.e.,
\begin{equation}
h_1 = \text{ELU}(W_1 \mathbf{g_p} + b_1),
\end{equation}
and
\begin{equation}
h_2 = \text{ELU}(W_2 h_1 + b_2).
\end{equation}
The output of the network is given by
\begin{equation}
\Delta \mathbf{p_m} = \tanh(W_3 h_2 + b_3) \in [-1, 1]^{2KM},
\end{equation}
where $\mathrm{tanh}(\cdot)$ is the activation function.
Thus $\mathbf{p_m}$ updated in \eqref{etl32} is further expressed as

\begin{equation}
\mathbf {p}_m^{(i+1,j)} = \mathbf {p}_m^{(i,j)} + \Delta \mathbf {p}_m^{(i,j)}.
\end{equation}

Finally, updated $\mathbf{p_m}^{(i+1,j)}$ is then normalized by
\eqref{rtle3} to satisfy the constraint \eqref{sldo611}.

\subsection{ Antenna Position Optimization Network}

We introduce a gradient-input unfolding network with respect to $\mathbf{d}$. Especially, the last updated position gradient is the input of the network.

Then a three-layer  network is designed, i.e.,
\begin{equation}
\mathbf{h}_1 = \text{ELU}(\mathbf{W}_1 \nabla_{\mathbf{d}} f(\mathbf{d}) + \mathbf{b}_1),
\end{equation}
\begin{equation}
\mathbf{h}_2 = \text{ELU}(\mathbf{W}_2 \mathbf{h}_1 + \mathbf{b}_2),
\end{equation}
and the output is obtained by
\begin{equation}
\Delta \mathbf{d} = \tanh(\mathbf{W}_3 \mathbf{h}_2 + \mathbf{b}_3) \in [-1, 1]^K.
\end{equation}
Then, $\mathbf{d}$ updated in \eqref{rty56} is further expressed as
\begin{equation}
\mathbf{d}^{(i+1,j)} = \mathbf{d}^{(i,j)} + \Delta \mathbf{d}^{(i,j)}.
\end{equation}
According to \eqref{rty56}, the updated $\mathbf{d}^{(i+1,j)}$ is projected into $\mathbf{D}$, given by
\begin{equation}
\mathbf{d}^{(i+1,j)} = \text{proj}_D \left( \mathbf{d}^{(i+1,j)} \right),
\end{equation}
where $\text{Proj}_\mathbf{D}(\cdot)$ is defined as

\begin{equation}
\text{Proj}_\mathbf{D}(x) =
\begin{cases}
x_{\text{min}} & \text{if } x < x_{\text{min}}, \\
x_{\text{max}} & \text{if } x > x_{\text{max}}, \\
x & \text{otherwise}.
\end{cases}
\end{equation}

%\begin{figure}[h]
%  \centering
%\includegraphics[width=0.5\textwidth]{002_v3.pdf}
%  \caption{Proposed FP-SCA-Meta-learning framework.}
%  \label{fig:example}
%\end{figure}

\subsection{Proposed GML-JO for Problem \eqref{mosd5}}

With the expansion of beamforming optimization and antenna position optimization into the complex domain network BN and real domain network AN, respectively, we propose a GML-JO algorithm for problem \eqref{mosd5}. Especially, the solution of the allocation of problem \eqref{mosd5} is also determined by the channel coefficients, and we consider the solving of problem \eqref{mosd5} as one sub-task of the proposed GML-JO for each channel coefficient. Furthermore, the parameters of BN and AN are updated by considering the average loss functions of multiple sub-tasks.

\subsubsection{Sub-Task Computing}
For each sub-task, the update is responsible for cyclically optimizing the variables $\mathbf{p_m}$ and $\mathbf{d}$. As shown in Fig. \ref{fig:example2125}, the BN and AN are used to compute the solution of sub-problems on $\mathbf{p_m}$ and $\mathbf{d}$, respectively. In each update, we sequentially update $\mathbf{p_m}$ and $\mathbf{d}$, which can be expressed as
\begin{equation}\label{lot342}
\mathbf {p}_m^{(i+1,j)} = \text{BN}_{\theta_{\mathbf{p}}}\left(\mathbf{d}^{(i,j)}, \mathbf {p}_m^{(0,j)} \right),
\end{equation}
and
\begin{equation}\label{lort54}
\mathbf{d}^{(i+1,j)} = \text{AN}_{\theta_{\mathbf{d}}}\left(\mathbf{p}_m^{(i,j)}, \mathbf{d}^{(0,j)} \right),
\end{equation}
where $\mathbf{p_m}^{(i,j)}$ and $\mathbf{d}^{(i,j)}$ represent the $\mathbf{p_m}$ and $\mathbf{d}$ in the $i$-th update of sub-network for the $j$-th sub-task.

\subsubsection{Loss Function}
For $N_i$ sub-tasks, the loss function of each sub-task is given by
\begin{equation}
L^{(j)} = - \sum_{m=1}^{M} \omega_m \log_2 \left( 1 + \text{SINR}_m^{(j)} \right),
\end{equation}
and the average loss function is expressed as
\begin{equation}\label{rot43}
L = \frac{1}{N_j} \sum_{j=1}^{N_j} L^{(j)}.
\end{equation}
It is important to point out that the average loss \eqref{rot43} can used to adjust the parameters of sub-networks in \eqref{lot342} and \eqref{lort54}.

\subsubsection{Sub-Network Parameter Update}

With the average loss, the sub-network parameters are updated by using gradient backpropagation and the Adam optimization algorithm, i.e.,
\begin{equation}
\theta_\mathbf {p}^* = \theta_\mathbf {p} + \alpha_\mathbf {p} \cdot \text{Adam}(\nabla_{\theta_\mathbf {p}}, L, \theta_\mathbf {p}).
\end{equation}
and
\begin{equation}
\theta_\mathbf{d}^* = \theta_\mathbf{d} + \alpha_\mathbf{d} \cdot \text{Adam}(\nabla_{\theta_\mathbf{d}}, L, \theta_\mathbf{d}).
\end{equation}
The sub-network parameter are updated by using the average loss, which ensures that the whole optimization process does not fall into a local optimum. The number of neurons in each layer of network is listed in Table I, and the details of proposed GML-JO algorithm are shown in Algorithm 1.

\begin{remark}
Different from the Model-Agnostic Meta-Learning algorithm, the sub-networks of our proposed GML-JO are used to compute the solutions of sub-problems, and their parameters are updated according to the average loss of multiple sub-tasks.
\end{remark}
\begin{remark}
Different channel coefficients are used to learn, and the obtained solutions of two sub-problem are more stable comparing to alternating optimizatio.
\end{remark}

\begin{table}[tp]
\centering
\caption{NUMBER OF NEURONS IN THE NNs}
\begin{tabular}{|c|l|c|c|}
\hline
\textbf{No.} & \textbf{Layer Name}       & \textbf{Beamforming-Network}& \textbf{Antenna-Network}\\ \hline
1             & Input Layer               & $2 \times N$ $ \times M$& $N$                    \\ \hline
2             & Linear Layer 1            & 256& 256\\ \hline
3             & ReLU Layer                & 256& 256\\ \hline
4             & Output Layer              & $2 \times N$ $ \times M$& $N$                    \\\hline
\end{tabular}
\end{table}

\begin{algorithm}[tp]
\caption{Proposed GML-JO Algorithm}
\begin{algorithmic}[1]
\Procedure{GML-JO}{$\mathbf{P}, \mathbf{d}$}
    \State Randomly initialize $\theta_P, \theta_d, \mathbf{P}^{(0,1)}, \mathbf{d}^{(0,1)}$,$\gamma_m, y_m$
    \State $\mathbf{P}^* \gets \mathbf{P}^{(0,1)}$
    \State $\mathbf{d}^* \gets \mathbf{d}^{(0,1)}$
    \State Initialize the maximum WSR $\text{MAX} \gets 0$

    \For{$k \gets 1$ \textbf{to} $N_k$}
        \State $\overline{\mathcal{L}} \gets 0$
        \For{$j \gets 1$ \textbf{to} $N_j$}
            \State $\mathbf{P}^{(0,j)} \gets \mathbf{P}^{(0,1)}$
            \State $\mathbf{d}^{(0,j)} \gets \mathbf{d}^{(0,1)}$
                        \State Update $\gamma_m^{(j)}, y_m^{(j)}$
            \For{$i \gets 1$ \textbf{to} $N_i$}
                \State $\Delta \mathbf{P}^{(i-1,j)} \gets \text{BN}_{\theta_P}(\mathbf{g_p})$
                \State $\mathbf{P}^{(i,j)} \gets \mathbf{P}^{(i-1,j)} + \lambda \cdot \Delta \mathbf{P}^{(i-1,j)}$
                \State $\mathbf{P}^{(i,j)} \gets \text{normalize\_P}(\mathbf{P}^{(i,j)}, P_{\text{total}})$
            \State \hspace{-2\algorithmicindent}
            \EndFor \textbf{end for}
            \State $\mathbf{P}^* \gets \mathbf{P}^{(N_i,j)}$
            \For{$i \gets 1$ \textbf{to} $N_i$}
                \State $\Delta \mathbf{d}^{(i-1,j)} \gets \text{AN}_{\theta_d}(\nabla_{\mathbf{d}} f)$
                \State $\mathbf{d}^{(i,j)} \gets \mathbf{d}^{(i-1,j)} + \lambda \cdot \Delta \mathbf{d}^{(i-1,j)}$
                \State $\mathbf{d}^{(i,j)} \gets \text{Proj}_D(\mathbf{d}^{(i,j)})$
            \State \hspace{-2\algorithmicindent}
            \EndFor\textbf{end for}
            \State $\mathbf{d}^* \gets \mathbf{d}^{(N_i,j)}$
            \State $\mathcal{L}^j \gets -\text{WSR}(\mathbf{P}^*, \mathbf{d}^*)$
            \If{$-\mathcal{L}^j > \text{MAX}$}
                \State $\text{MAX} \gets -\mathcal{L}^j$
                \State $\mathbf{P}_{\text{opt}} \gets \mathbf{P}^*$
                \State $\mathbf{d}_{\text{opt}} \gets \mathbf{d}^*$
            \State \hspace{-2\algorithmicindent}
            \EndIf\textbf{end if}
            \State $\overline{\mathcal{L}} \gets \overline{\mathcal{L}} + \mathcal{L}^j$
        \State \hspace{-\algorithmicindent}
        \EndFor   \textbf{end for}
        \State $\overline{\mathcal{L}} \gets \frac{1}{N_j} \overline{\mathcal{L}}$
        \State $\theta_P \gets \text{Adam}(\theta_P, \nabla \overline{\mathcal{L}})$
        \State $\theta_d \gets \text{Adam}(\theta_d, \nabla \overline{\mathcal{L}})$
    \State \textbf{end for}
    \EndFor
    \State \Return $\mathbf{P}_{\text{opt}}, \mathbf{d}_{\text{opt}}$
    \State \textbf{end procedure}
\EndProcedure
\end{algorithmic}
\end{algorithm}

%\subsection{ Complexity analytics}

%Thanks to the equivalent transformation and convex approximation techniques employed in the preprocessing stage, the computational complexity of the %beamforming subproblem is significantly reduced from \( O(M^3 K^3) \) to \( O(M K^2) \). Furthermore, the gradient-based meta-learning approach %enhances efficiency by eliminating the need for matrix inversion operations, thereby further reducing the computational burden. In the neural network %stage, the GML-JO framework achieves a computational complexity of \( O(N_k N_j N_i M^2 K) \), with a comprehensive derivation provided in Appendix %A. In contrast, the traditional AO algorithm incurs a complexity of \( O(M^5 K^5) \). Consequently, GML-JO demonstrates a %substantial advantage in computational efficiency over AO.

\section{Simulation Results}
\vspace{1.0em}
In this section, we evaluate the performance of the proposed GML-JO algorithm in multi-user-multi-waveguide  pinching-antenna communication systems through simulation experiments.

\subsection{Parameter Setting and Baseline}

%{\color{red}The simulation system contains 2 waveguides, each equipped with a pinch antenna with adjustable position, serving 2 single-antenna users.} %
The users are randomly distributed in a two-dimensional area of 20 m $\times$ 20 m, and the waveguides are positioned at equal distances within the area. %The channel model integrates {\color{red}large-scale path loss, small-scale fading, and phase variation.}%
The simulation parameters include -40dBm noise power, 3-meter-high waveguides, 28 GHz carrier frequency, 10 GHz cutoff wavelength, 1.4 effective refractive index, and a minimum SINR threshold is used to ensure the basic quality of service. The optimization algorithm adopts learning rates of \( \eta_D = 5 \times 10^{-4} \) for antenna position optimization and \( \eta_p = 2 \times 10^{-4} \) for beamforming vector optimization, and the number of iterations is set to be 500 in sub-task computation, and 50 channels are considered.

We compare the proposed GML-JO algorithm with the following baselines:

\begin{itemize}
%    \item \textbf{Equivalent Transformed Gradient Meta-learning (ET-GML):}
%Based on lagrange duality and quadratic transformations, we introduce auxiliary variables \( \gamma_m \) and \( y_m \) to decouple variables, transforming the original non-convex WSR problem \eqref{mosd2} into an equivalent form  \eqref{mosd5}. Furthermore, problem \eqref{mosd5} is decomposed into two sub-problems, where one for the beamforming vectors \( \mathbf{p}_m \) and one for the antenna position vector \( \mathbf{d} \); it employs two networks, BN and AN, to alternately optimize these subproblems, where BN optimizes \( \mathbf{p}_m \) by handling the SINR constraint  \eqref{stre712}  via a penalty function method {\color{red}(with penalty term \( V_m = \max(0, \gamma_m^{\text{min}} (I_m + \sigma^2) - G_m) \))} and using normalization to satisfy power constraints, while AN optimizes \( \mathbf{d} \) by ensuring position constraints through projection.

\item \textbf{Gradient Meta-learning Optimization (GML):}
Unlike our proposed GML-JO, which uses equivalent substitution and convex approximation, GML directly optimizes the original problem \eqref{mosd2}. The SINR constraints are incorporated into the objective function as penalty terms, i.e.,
\begin{equation}
{E}(\mathbf{p}_m, \mathbf{d}) = \sum_{m=1}^M w_m \log_2 (1 + \text{SINR}_m) + \mu \sum_{m=1}^M V_m,
\end{equation}
where
\begin{equation}
V_m = \max(0, \gamma_m^{\text{min}} (I_m + \sigma^2) - G_m).
\end{equation}
GML employs two neural networks for optimization, where BN processes the gradient of the objective function $\nabla_{\mathbf{p}_m} E$ to generate updates for $\mathbf{p}_m$, enforcing power constraints through output normalization; while the AN utilizes the gradient $\nabla_{\mathbf{d}} E$ to produce updates for $\mathbf{d}$, confining it within the predefined region $S$ via a projection operator. The GML adopts an AO strategy, iteratively refining $\mathbf{p}_m$ and $\mathbf{d}$ to solve the original problem.

    \item \textbf{Equivalent Transformed Convex Approximation (ET-CA):}
Based on the lagrange duality and quadratic transformations, we introduce auxiliary variables \( \gamma_m \) and \( y_m \) to decouple variables, transforming the original non-convex WSR problem \eqref{mosd2} into an equivalent form  \eqref{mosd5}, and decomposes it into two subproblems: one for the beamforming vectors \( \mathbf{p}_m \) and one for the antenna position vector \( \mathbf{d} \). For the non-convex terms in the beamforming subproblem, convex approximation techniques are employed to transform it into a convex optimization problem. The SINR constraints are handled through the penalty function method. Unlike GML-JO, CVX is used instead of neural networks to solve the \(\mathbf{p}_m\) subproblems. The antenna position \(\mathbf{d}\) is updated via the projected gradient. Refer to Appendix A for further details.
    \item \textbf{AO:}
This method directly employs the conventional non-convex optimization problem through AO, circumventing the need for convex approximations, alternately updating the beamforming vectors \(\mathbf{p}_m\) and the antenna position \(\mathbf{d}\) through the gradient projection method. The SINR constraints are directly embedded into the objective function via the penalty function method and are gradually satisfied during the iteration process.
    \item \textbf{Gradient Descent (GD): }
The considered GD method optimizes the objective function by simultaneously updating the beamforming vectors \(\mathbf{p}_m\) and the antenna position vector \(\mathbf{d}\). Unlike AO, GD utilizes historical information during updates rather than fixing variable values, allowing flexible selection of iteration indices. The process begins by incorporating SINR constraints into the objective function as penalty terms. In the \( k \)-th iteration, the updates are given by
\begin{align}
\mathbf{p}^{(k+1)} &= \mathbf{p}^{(k)} - \eta \nabla_{\mathbf{p}} L(\mathbf{p}^{(k)}, \mathbf{d}^{(\tau_p)}), \label{eq:gd_p_update}
\end{align}
and
\begin{align}\mathbf{d}^{(k+1)} &= \mathbf{d}^{(k)} - \eta \nabla_{\mathbf{d}} L(\mathbf{p}^{(\tau_d)}, \mathbf{d}^{(k)}), \label{eq:gd_d_update}
\end{align}
where \(\tau_p \leq k\) and \(\tau_d \leq k\) are historical iteration indices, and \(\eta\) is the step size. After the updates, power constraints are enforced through normalization, and position constraints are implemented via projection.
    \item \textbf{Uniform Distribution Baseline (UDB): }
Instead of iterative optimization, \( \mathbf p_m \) and \( \mathbf d \) are directly set to uniformly distributed random values or averages, which are used as references for the lower bound on the system performance.
\end{itemize}
\begin{figure}[t]
  \centering
\includegraphics[width=0.5\textwidth]{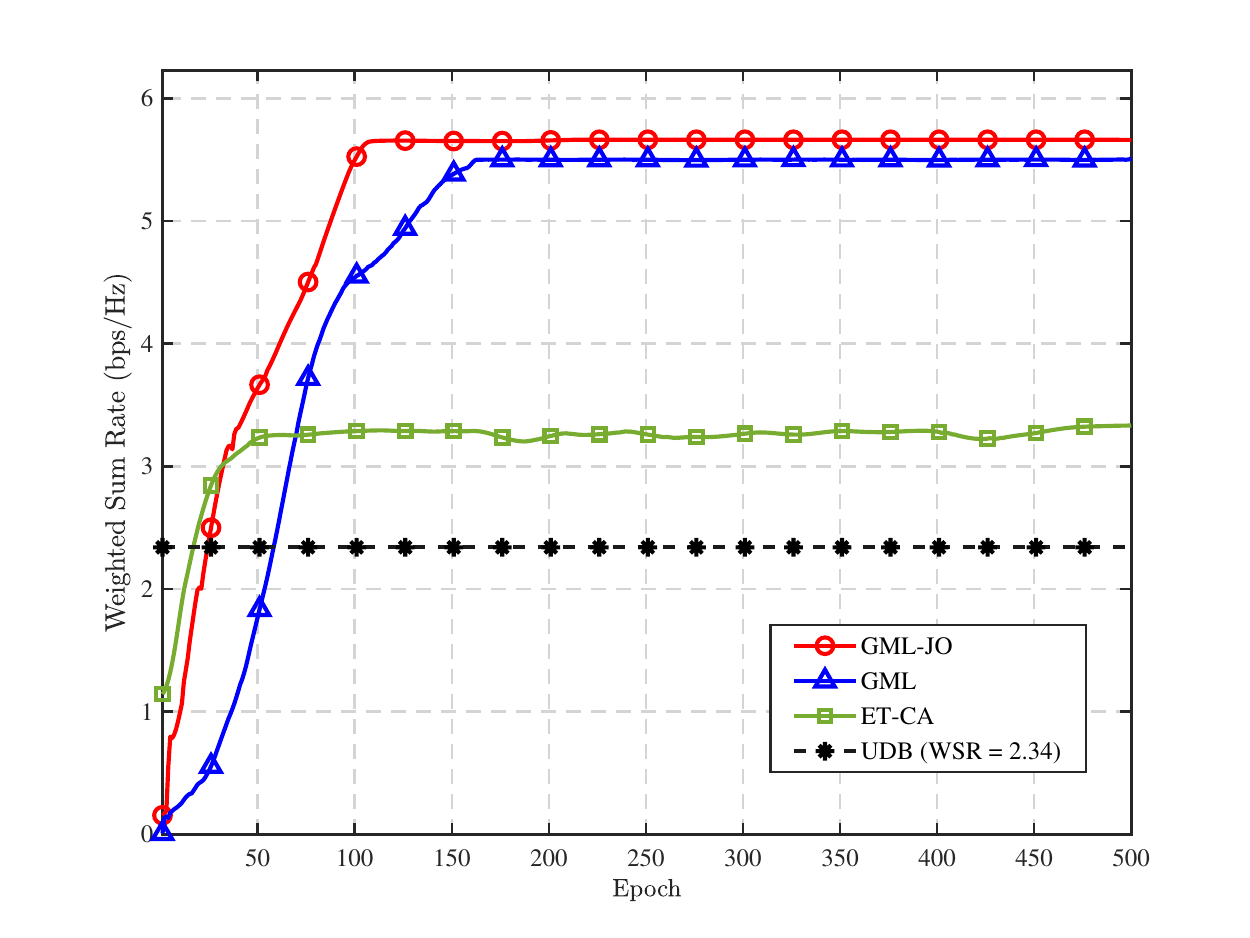}
  \caption{WSR performance comparison of different optimization algorithms.}
  \label{fig:example9876}
\end{figure}
\begin{figure}[t]
  \centering
\includegraphics[width=0.5\textwidth]{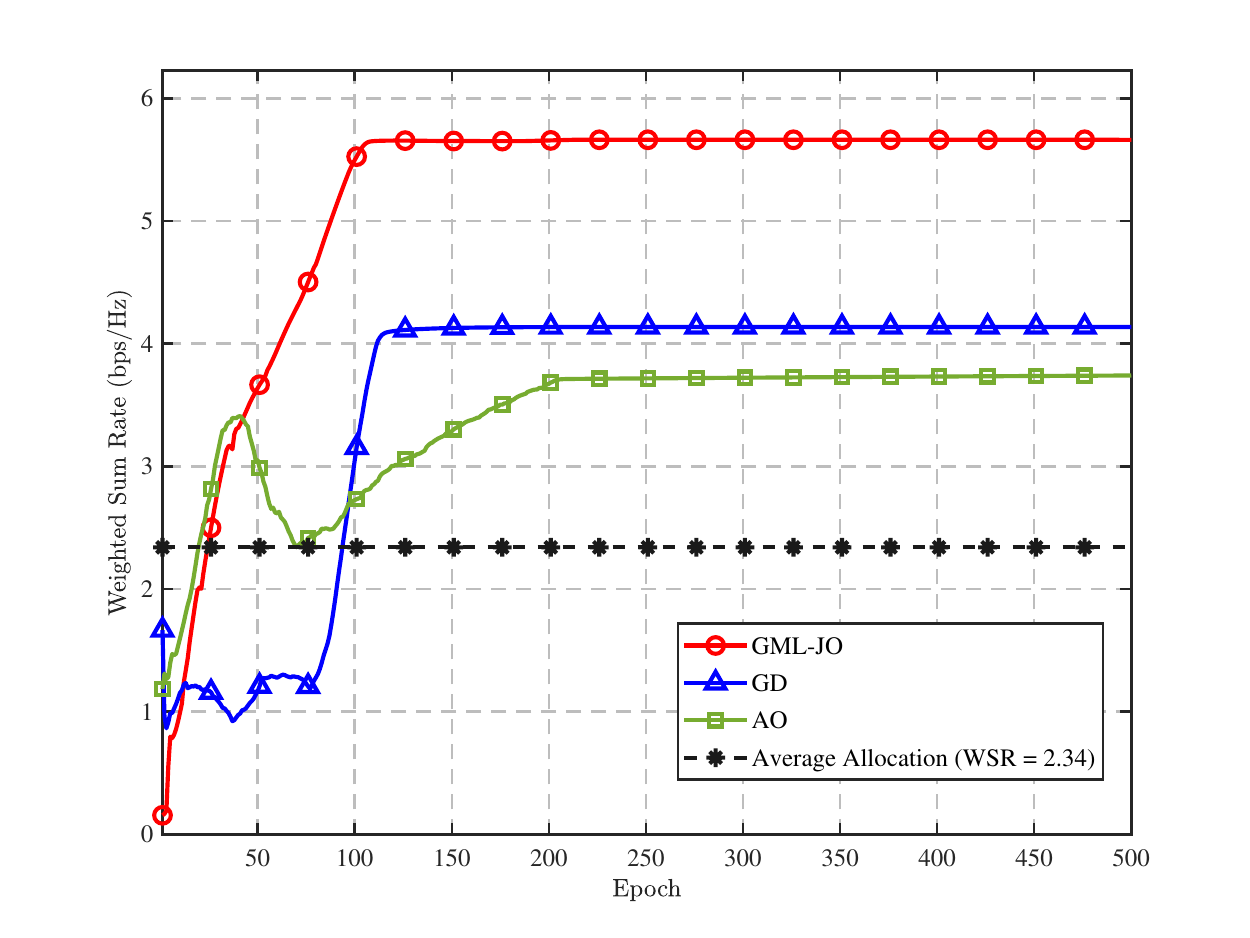}
  \caption{WSR performance comparison of GML-JO and traditional optimization algorithms.}
  \label{fig:example9875}
\end{figure}

We first evaluate the performance of the proposed GML-JO algorithm under fixed parameters. As illustrated in Fig. \ref{fig:example9876}, the WSR of our proposed GML-JO algorithm is 5.6 bits/s/Hz after 100 iterations, representing a 5.7\% improvement compared to the GML algorithm and a 24.4\% improvement compared to the ET-CA algorithm. Furthermore, our proposed GML-JO algorithm significantly outperforms the UDB in terms of WSR.

\begin{figure}[t]
  \centering
\includegraphics[width=0.5\textwidth]{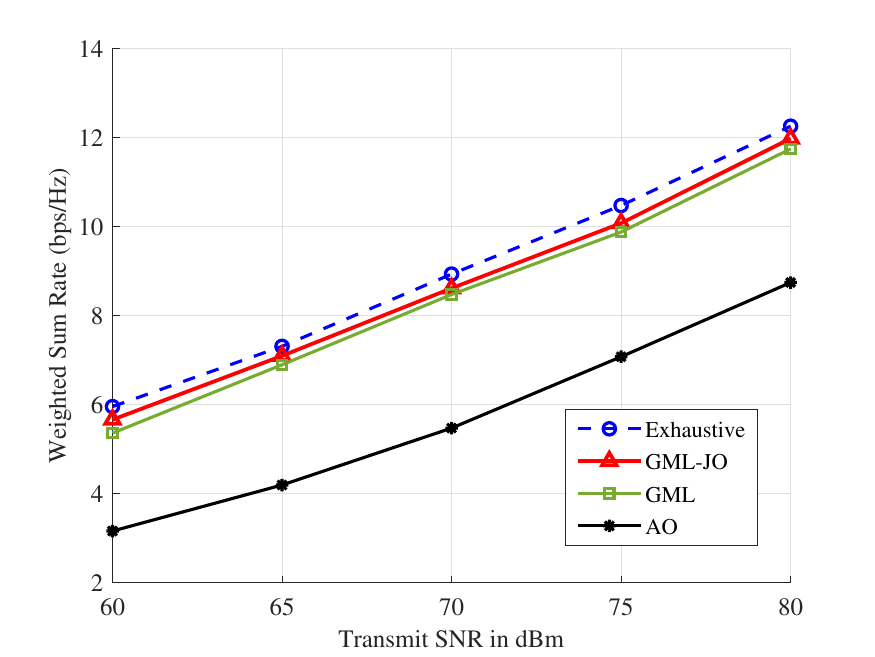}
  \caption{WSR performance comparison of different optimization algorithms.}
  \label{fig:example9874}
\end{figure}
\begin{figure}[t]
  \centering
\includegraphics[width=0.5\textwidth]{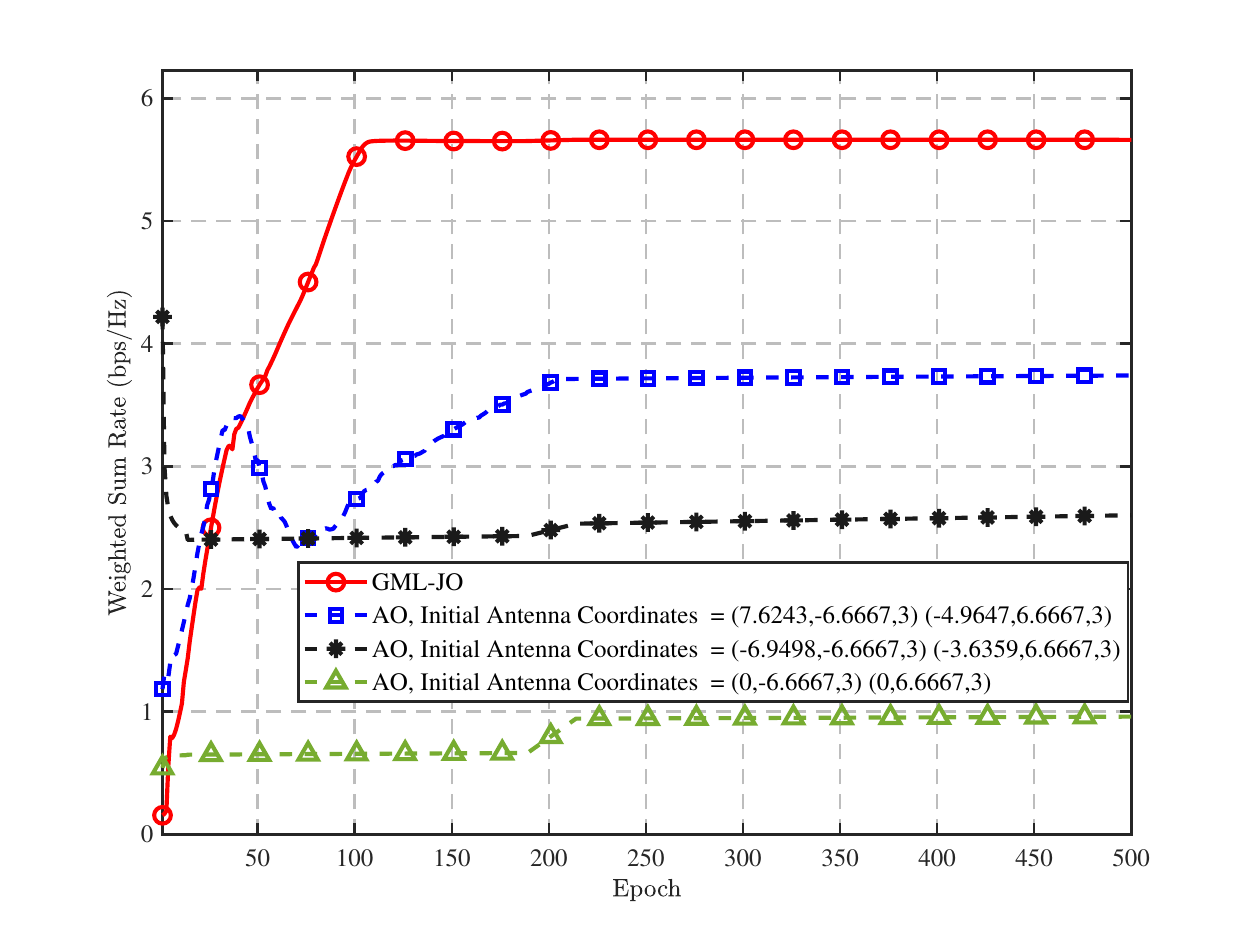}
  \caption{WSR performance comparison of GML-JO and AO under different initial antenna coordinates.}
  \label{fig:example9873}
\end{figure}
\subsection{ Algorithm Performance and Convergence Analysis}
\begin{figure}[t]
  \centering
\includegraphics[width=0.5\textwidth]{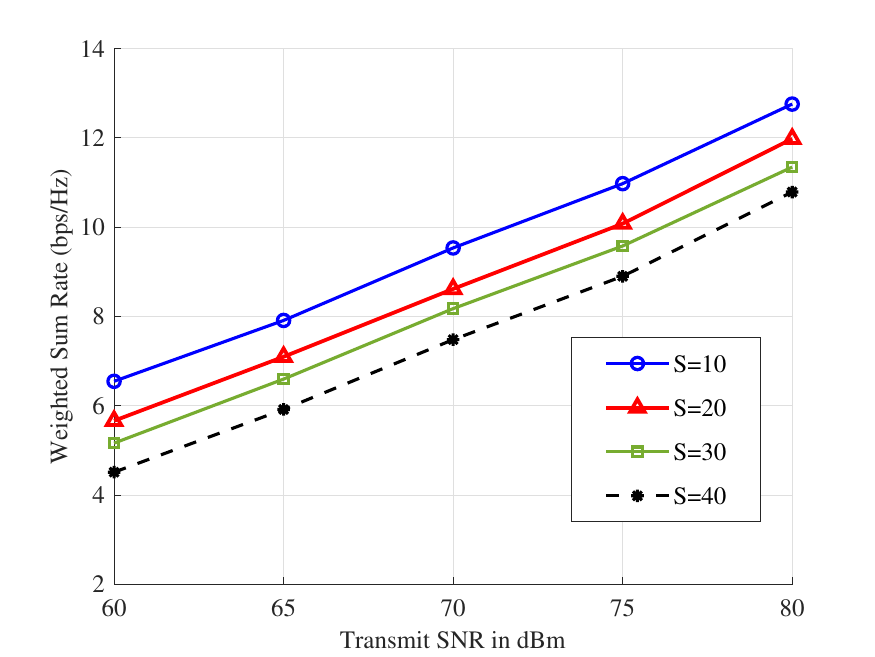}
  \caption{WSR performance of proposed GML-JO with different antenna positions.}
  \label{fig:example9872}
\end{figure}
\begin{figure}[t]
  \centering
\includegraphics[width=0.5\textwidth]{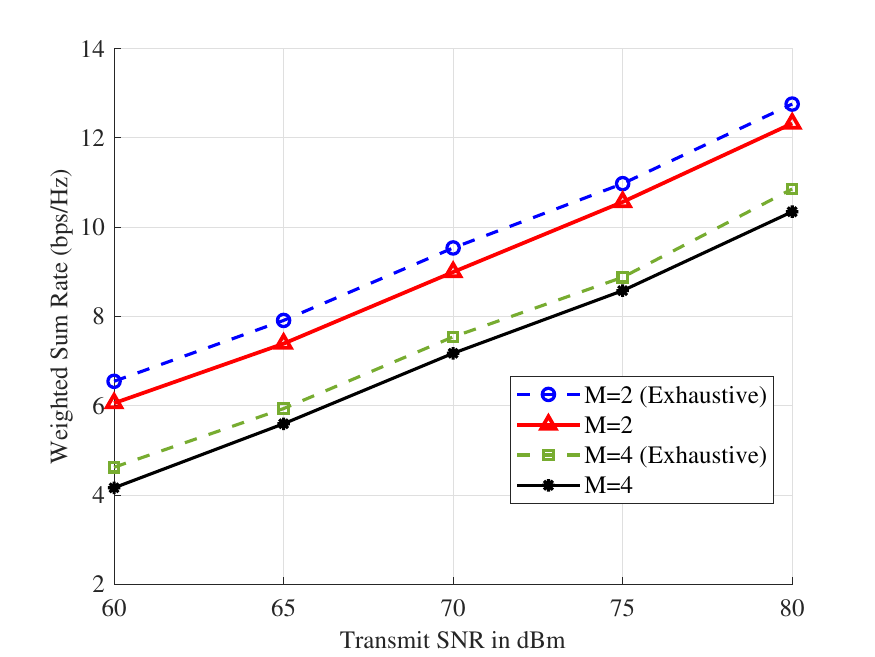}
  \caption{Performance comparison of  proposed GML-JO and exhaustive search with different numbers of users.}
  \label{fig:example9871}
\end{figure}
\begin{figure}[t]
  \centering
\includegraphics[width=0.5\textwidth]{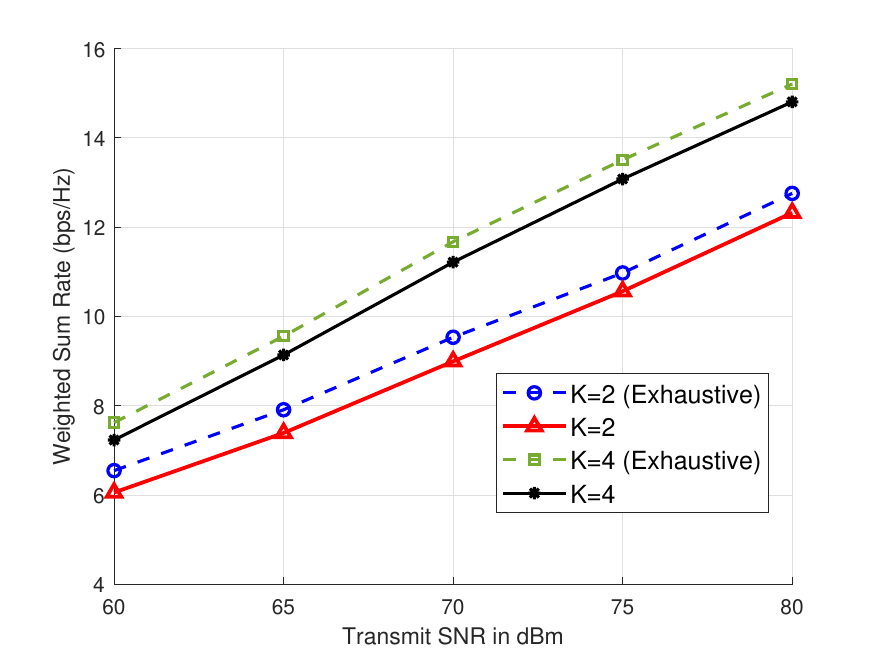}
  \caption{Performance comparison of proposed GML-JO and exhaustive search with different numbers of waveguides.}
  \label{fig:example9869}
\end{figure}
\begin{figure}[t]
  \centering
\includegraphics[width=0.5\textwidth]{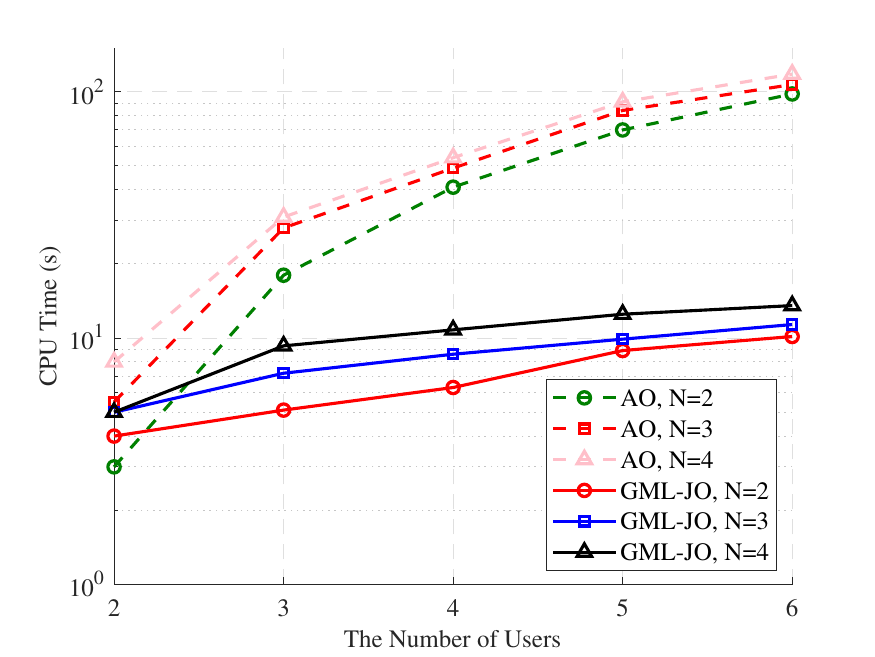}
  \caption{CPU execution time comparison of proposed GML-JO and AO.}
  \label{fig:example9870}
\end{figure}

Fig. \ref{fig:example9875} further evaluates the performance of our proposed GML-JO algorithm in a comparison to conventional optimization methods under an identical number of iterations. It demonstrates that the WSR of the proposed GML-JO algorithm is enhanced by 30.2\% and 16.7\% compared to the GD method and the AO method, respectively. Fig. \ref{fig:example9874} illustrates the impact of varying the transmit SNR from 60 dBm to 80 dBm on the WSR. As the SNR increases, the WSR of all methods exhibits a significant enhancement. In the SNR range $[60 \mathrm{dBm}, 70 \mathrm{dBm}]$, the performance differences among the algorithms remain minimal, with WSR increases ranging from approximately 2 to 4 bits/s/Hz. In contrast, within $[75 \mathrm{dBm},80 \mathrm{dBm}]$, the performance gap among the methods becomes more pronounced. Notably, the WSR achieved by the proposed GML-JO algorithm closely approaches the theoretical upper limit established by an exhaustive search, while maintaining substantially lower computational complexity.

\subsection{Parameter Impact Analysis}

Fig. \ref{fig:example9873} illustrates that the GML-JO algorithm exhibits high stability under different initial antenna coordinates. On the other hand, the performance of the AO method is sensitive to initialization, exhibiting WSR fluctuations of up to 15\%.

The effect of the antenna position adjustment range \( S \)  on the WSR is investigated in Fig. \ref{fig:example9872} for the tested values of 10, 20, 30, and 40 meters. For $S=20$, the WSR attains a peak of approximately 5.6 bits/s/Hz under high SNR conditions. As \( S \) increases to 40 meters, the WSR decreases to approximately 4.5 bits/s/Hz, with the reduction being particularly notable at low SNR, where it drops by about 40\%.

Fig. \ref{fig:example9871} evaluates the performance differences between the proposed GML-JO algorithm and the exhaustive method for scenarios with user counts of \( M = 2 \) and \( M = 4 \). For \( M = 2 \), the WSR of both methods are nearly identical, achieving approximately 5.6 bits/s/Hz at SNR = 60 $\mathrm{dBm}$. In contrast, when \( M = 4 \), the exhaustive method exhibits a slight performance edge at low SNR, reaching a WSR of about 4.4 bits/s/Hz at SNR = 60 $\mathrm{dBm}$; while the GML-JO algorithm sustains a robust performance with a WSR of approximately 4.1 bits/s/Hz at SNR = 60 $\mathrm{dBm}$. Fig. \ref{fig:example9869} evaluates the performance differences between the proposed GML-JO algorithm and the exhaustive method for scenarios with waveguide counts of $K=2$ and $K=4$, while fixing the number of users to $M=2$. For \( K = 2 \), the WSR of both methods are nearly identical, achieving approximately 5.6 bits/s/Hz at SNR = 60 $\mathrm{dBm}$. In contrast, when \( K = 4 \), the exhaustive method exhibits a slight performance edge at low SNR, reaching a WSR of about 7.8 bits/s/Hz at SNR = 60 $\mathrm{dBm}$; while the GML-JO algorithm sustains a robust performance with a WSR of approximately 7.6 bits/s/Hz at SNR = 60 $\mathrm{dBm}$.

As shown in Fig. \ref{fig:example9870}, we compare the average CPU execution time of GML-JO and AO across varying numbers of users $M$ and antennas $N$ with a fixed transmit power of 60 dBm. When \( M \) is fixed, the execution time of AO grows at about three times the rate of GML-JO as \( N \) increases. When \( N \) is fixed, the time of GML-JO grows slowly as \( M \) increases; while AO shows a sharp increase. Overall, the GML-JO has more than 10 times the speedup of the AO.

\section{Conclusion}
\vspace{1.0em}

In this paper, we considered a multi-waveguide multi-user pinching-antenna systems and formulated an optimization problem designed to maximize the WSR. The high-dimensional and non-convex nature of this problem poses significant challenges for conventional optimization methods. To address these issues, we proposed a GML-JO algorithm. Especially, proposed GML-JO employed an equivalent substitution technique to approximate and reframe the original problem, effectively decoupling the tightly intertwined dependencies between beamforming and antenna position optimization. Subsequently, we applied a convex approximation strategy to transform the non-convex problem into a tractable convex form, facilitating efficient optimization. Two dedicated sub-networks were subsequently employed to  address the sub-optimization tasks. Ultimately, we obtained the average rate across multiple sub-tasks computations, delivering a robust and effective solution. Experimental results demonstrate that the proposed GML-JO achieved rapid convergence to a WSR of 5.6 bits/s/Hz within 100 iterations, delivering 32.7\% superior performance enhancement compared to the benchmark AO method while maintaining significantly reduced computational complexity. Notably, the obtained solution exhibited remarkable robustness against variations in random user distributions, extensive antenna positioning adjustments, and dynamic channel conditions.

\section*{\textbf{Appendix A}}
\begin{center}
\textbf{ET-CA Baseline Algorithm}
\end{center}

Based on the Lagrange duality and quadratic transformations, we introduce auxiliary variables \( \gamma_m \) and \( y_m \) to decouple variables, transforming the original non-convex WSR problem \eqref{mosd2} into an equivalent form \eqref{mosd5}, and decompose it into two subproblems. The auxiliary variables update are given by \eqref{mosd520} and \eqref{mosd521}, and the beamforming coefficient and antenna position updates of ET-CA are given as follows:

\subsubsection{Beamforming Coefficient $\mathbf{p}_m$ Update}
For given $y_m^{(t)}$, $\mathbf{d}^{(t)}$, and $\gamma_m^{(t)}$, the sub-problem of problem \eqref{mosd5} on variable $\mathbf{p}_m$ can be formulated as \eqref{sldp32}. Since $G_m$ and $I_m$ are nonconvex functions with respect to variable $\mathbf{p}_m$, the objective function and the $\mathrm{SINR}_m$ constraint in subproblem \eqref{sldp32} are nonconvex. To solve this problem, we perform a first-order Taylor expansion of $G_m$ and $I_m$ at $\mathbf{p}_m^{(t)}$ by \eqref{mosd522} and \eqref{mosd523}.

Then, the subproblem \eqref{sldp32} can be approximated by
\begin{subequations}\label{lotr1}
\begin{align}
& \max_{\mathbf{p}_m} \tilde{f}_p(\mathbf{p}_m ) \\
&\text{s.t.} \quad \eqref{sldo611}, \eqref{sinre2}.
\end{align}
\end{subequations}

Using the quadratic penalty function method, problem \eqref{lotr1} only with constraint \eqref{sinre2} is transformed into an unconstrained problem, i.e.,

\begin{equation}\label{oier5}
\max_{\mathbf{p}_m } \tilde{f}_p^{\text{new}}(\mathbf{p}_m ) = \tilde{f}_p(\mathbf{p}_m ) - \mu \sum_{m=1}^{M} V_m^2(\mathbf{p}_m ).
\end{equation}
To handle power constraint \eqref{sldo611}, we introduce an obstacle function, defined as
\begin{equation}\label{lorte}
\varphi(\mathbf{p}_m) = - \log \left( P_{\text{total}} - \sum_{m=1}^{M} \| \mathbf{p}_m \|^2 \right).
\end{equation}

By considering the constraint \eqref{sldo611}, the objective function of unconstrained problem \eqref{oier5} is further expressed as
\begin{equation}
Z(\mathbf{p}_m) = \tilde{f}_{p}^{\text{new}}(\mathbf{p}_m) + \frac{1}{t} \varphi(\mathbf{p}_m),
\end{equation}
where $t > 0$ is the obstacle parameter.

The interior point method maximises $Z(\{\mathbf{p}_m\})$ using gradient ascent method, where the gradient  calculation is given by
\begin{equation}\label{lopt32}
\nabla_{\mathbf{p}_m} Z = \nabla_{\mathbf{p}_m} \tilde{f}_p^{\text{new}} + \frac{1}{t} \nabla_{\mathbf{p}_m} \varphi(\mathbf{p}_m).
\end{equation}
From \eqref{oier5} and \eqref{lorte}, we have
\begin{equation}\label{rty6}
\nabla_{\mathbf{p}_m} \tilde{f}_p^{\text{new}} = \nabla_{\mathbf{p}_m} \tilde{f}_p - 2\mu \sum_{m=1}^{M} V_m \nabla_{\mathbf{p}_m} V_m,
\end{equation}
and
\begin{equation}\label{port7}
\nabla_{\mathbf{p}_m} \varphi(\mathbf{p}_m) = \frac{2 \mathbf{p}_m}{P_{\text{total}} - \sum_{m=1}^{M} \| \mathbf{p}_m \|^2}.
\end{equation}

Submitting \eqref{rty6} and \eqref{port7} into \eqref{lopt32}, the update of $\{\mathbf{p}_m\}$ is expressed as
\begin{equation}
\mathbf{p}_m^{(k+1)} = \mathbf{p}_m^{(k)} + \alpha \nabla_{\mathbf{p}_m} Z.
\end{equation}

\subsubsection{Antenna Position $\mathbf{d}$ Update}
With fixing $y_m^{(t)}$, $\mathbf{p}_m^{(t+1)}$, and $\gamma_m^{(t)}$, the subproblem of problem \eqref{mosd5} on variable $\mathbf{d}$ can be formulated as \eqref{lotr5}.

With the calculation of the gradient $\nabla_{\mathbf{d}} f(\mathbf{d})$, the update of $\mathbf{d}$ is given by

\begin{equation}
\mathbf{d}^{(t+1)} = \mathbf{d}^{(t)} + \alpha \nabla_{\mathbf{d}} f \left( \mathbf{d}^{(t)} \right),
\end{equation}
where $\alpha$ is the step size. To ensure that $\mathbf{d}$ satisfies the constraint \eqref{lert3}, the updated $\mathbf{d}^{(t+1)}$ needs to be projected to the feasible set $D$, i.e.,

\begin{equation}\label{rty56}
\mathbf{d}^{(t+1)} = \text{proj}_D \left( \mathbf{d}^{(t)} + \alpha \nabla_{\mathbf{d}} f \left( \mathbf{d}^{(t)} \right) \right).
\end{equation}

\vspace{0.9em}
\bibliographystyle{IEEEtran}
\bibliography{IEEEfull,trans}
\vspace{0.9em}

\end{document}